\documentclass[runningheads]{llncs}
\usepackage[T1]{fontenc}
\usepackage{graphicx}
\usepackage{amsmath}
\usepackage{amssymb}
\usepackage{hyperref, xcolor}
\usepackage{multirow}
\usepackage{booktabs}

\newcommand{\metodi}{ICSA}

\begin{document}
\title{Data anonymization in the presence of outliers via invariant coordinate selection}
\titlerunning{Data anonymization in the presence of outliers via ICS}
%
\author{Katariina Perkonoja\inst{1}\orcidID{0000-0002-3812-0871} \and
Joni Virta \inst{1}\orcidID{0000-0002-2150-2769}}
\authorrunning{K. Perkonoja and J. Virta}
%

\institute{Department of Mathematics and Statistics, University of Turku, Finland}
\maketitle              
\begin{abstract}

Protecting confidential data while preserving utility is particularly challenging when data sets contain outlying observations. Existing latent space anonymization methods, such as spectral anonymization (SA), rely on principal component analysis (PCA) and may therefore be vulnerable to contamination. We investigate anonymization in the presence of outliers and propose \metodi{}, a robust alternative to SA based on invariant coordinate selection (ICS). By replacing the PCA transformation with ICS, the robustness of the anonymization procedure can be regulated through the choice of scatter matrices. Alongside the methodological development, we derive a theoretical result showing that SA fails under sufficiently influential outliers. To assess the practical implications of this result, we compare the privacy--utility trade-off of \metodi{} and SA through simulation experiments under varying contamination settings and outlier severities. Our findings indicate that implementations of \metodi{} based on robust scatter matrices achieve stronger privacy protection than SA, while typically maintaining comparable, and in some cases improved, utility. We further examine the empirical performance of the proposed method using a benchmark clinical data set, where \metodi{} demonstrates superior overall privacy--utility efficiency relative to SA. These results suggest that explicitly accounting for outliers can materially improve anonymization performance and that robust latent space transformations offer a promising direction for privacy-preserving statistical data release.

\keywords{Data anonymization  \and Invariant coordinate selection \and Outlier contamination \and Latent space transformation \and Statistical disclosure control}
\end{abstract}
\section{Introduction}\label{sec:introduction}

This work investigates the anonymization of data sets that contain outlying observations. By outlier we mean a subject whose measurements differ considerably from the bulk of the data, making them particularly susceptible to identity disclosure. We work under the assumption that the outliers represent the population, meaning that simply removing them from the data is not an option and would bias the results. Letting $X$ and $X^*$ denote the original data and the anonymized data sets, respectively, this means that the anonymized data should reach a compromise between the following two competing outcomes.

\begin{enumerate}
    \item[(i)] (Utility) Outliers should be replicated faithfully enough in $X^*$ that analyses based on $X^*$ and $X$ agree to sufficient degree.
    \item[(ii)] (Privacy) Outliers should not be replicated too closely in $X^*$ to preserve the privacy of the original subjects.
\end{enumerate}

A standard strategy to masking observations is adding noise to the data, see the review in \cite{mivule2013utilizing}. This approach, while effective for hiding non-outlying data points, is, however, suboptimal in the presence of outliers as extremely strong noise would be required to mask them, leading to a large loss in utility. More suitable approaches instead redistribute the existing variability in the data to achieve the masking while preserving the dependency structure of the data \cite{calvino2017fa,naldi2018hiding,d2020privacy,perkonoja2024asymptotic}. One such method, which we take as our starting point is \textit{spectral anonymization} (SA)~\cite{lasko2009spectral}.

SA uses the principal component decomposition of $X$ to conduct randomized anonymization and its exact algorithm is described later in Section~\ref{sec:methods}. Despite its competitive performance under regular enough data \cite{kundu2019privacy}, SA suffers a serious drawback when the data contain outliers, compromising objective (ii) above. Namely, Theorem \ref{theo:motivating_result} below states that, when the data contain a strong outlier ($x_{n + 1}$ in the result), the relative distance of this outlier to its closest match in the anonymized data is, over all possible anonymizations, bounded above by a quantity that goes to zero as the severity of the outlier increases. Below, the set of all possible spectral anonymizations of a data set $X$ is denoted by $\mathcal{S}(X)$.

\begin{theorem}\label{theo:motivating_result}
    Let $X = (x_1, \ldots, x_{n + 1})' \in \mathbb{R}^{(n + 1) \times p}$ be a $p$-variate sample where $\| x_1 \|, \ldots, \| x_n \| \leq M$ for some $M > 0$. Then,
    \begin{align*}
        \max_{X^* \in \mathcal{S}(X)} \left\{ \min_{i = 1, \ldots, n + 1} \left( \frac{\| x_{n + 1} - x_i^* \|^2}{\| x_{n + 1} \|^2 } \right) \right\} = \mathcal{O}\left(  \frac{M}{\| x_{n + 1} \|} \right),
    \end{align*}
    when $M/\| x_{n + 1} \| \rightarrow 0$.
\end{theorem}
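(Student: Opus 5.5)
The plan is to work directly with the singular value decomposition underlying SA. We write $X = UDV' = \sum_{k} d_k u_k v_k'$ and recall that every $X^* \in \mathcal{S}(X)$ has the form $X^* = U^* D V'$. Here each column of $U^*$ is a row permutation of the corresponding column of $U$, chosen independently across columns. If Section~\ref{sec:methods} specifies that SA centres the data first, I will run the argument on the centred matrix and add the mean back at the end; the same bookkeeping goes through, as explained below. The strategy is to exhibit, for \emph{every} $X^*$, one row $i^*$ with $\|x_{n+1} - x^*_{i^*}\| \le C\sqrt{n}\,M$, with $C$ an absolute constant. This gives a bound of order $M^2/\|x_{n+1}\|^2$, and that is $\mathcal{O}(M/\|x_{n+1}\|)$ as the ratio tends to zero, for fixed $n$.

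\textbf{Step 1: the tail singular values are small.} By Eckart--Young, $\sum_{k\ge 2} d_k^2 = \min_{\mathrm{rank}(A)\le 1}\|X-A\|_F^2$. Taking $A = e_{n+1}x_{n+1}'$ gives $\sum_{k\ge2} d_k^2 \le \sum_{i\le n}\|x_i\|^2 \le nM^2$.

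\textbf{Step 2: any row of $X^*$ is its first-component part plus a small remainder.} For any row index $i$, row $i$ of $X^*$ is
\begin{align*}
x_i^* = d_1 U^*_{i1} v_1 + r_i, \qquad \|r_i\|^2 = \sum_{k\ge2} d_k^2 (U^*_{ik})^2 \le \sum_{k\ge 2} d_k^2 \le nM^2.
\end{align*}
This uses orthonormality of the $v_k$ and $|U^*_{ik}|\le 1$, which holds because columns of $U^*$ are permutations of unit vectors. Applying the same identity to $X$ itself at row $n+1$ gives $x_{n+1} = d_1 U_{n+1,1} v_1 + r$ with $\|r\|\le\sqrt{n}M$.

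\textbf{Step 3: follow the outlier's loading.} Let $\pi$ be the permutation applied to the first column of $U$, and set $i^* = \pi^{-1}(n+1)$, so that $U^*_{i^*1} = U_{n+1,1}$. Then
\begin{align*}
x^*_{i^*} - x_{n+1} = r_{i^*} - r, \qquad \text{so} \qquad \|x^*_{i^*} - x_{n+1}\| \le 2\sqrt{n}M.
\end{align*}
Hence the minimum over $i$ is at most $4nM^2/\|x_{n+1}\|^2$, uniformly over $\mathcal{S}(X)$, which proves the claim. Notably, no perturbation theory for $v_1$ is needed, because the argument never uses $v_1 \approx x_{n+1}/\|x_{n+1}\|$; it only uses that the first-component contribution is copied verbatim.

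\textbf{Centring.} If SA centres, write $\bar x$ for the mean and $\tilde x_i = x_i - \bar x$. The rank-one comparison $A = e_{n+1}\tilde x_{n+1}'$ leaves residual rows $\tilde x_i$ with $\|\tilde x_i\| \le 2M + \|x_{n+1}\|/(n+1)$. Those residuals are not $\mathcal{O}(M)$, so this comparison is inadequate. The fix is to use instead the rank-one matrix $A = (e_{n+1} - \mathbf{1}/(n+1))x_{n+1}'$, whose difference from the centred matrix has rows bounded by $2M$. Step 1 then gives $\sum_{k\ge2}d_k^2 \le 4(n+1)M^2$. Step 3 proceeds unchanged, and adding $\bar x$ back to both $x^*_{i^*}$ and $\tilde x_{n+1}$ cancels.

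\textbf{Main obstacle.} The step I expect to be delicate is matching the exact definition of $\mathcal{S}(X)$ in Section~\ref{sec:methods}. If SA also allows sign flips of the columns of $U$, then $U^*_{i^*1}$ could be $-U_{n+1,1}$, and the argument would only place $-x_{n+1}$ near a row. In that case the claim as stated would need the maximum restricted to permutations. I would check the definition and, if sign changes are included, verify whether they act only jointly with $v_k$; in that case they cancel and the argument is unaffected.
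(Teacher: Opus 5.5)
Your proof is correct, and it takes a genuinely different route from the paper's in how the part beyond the first component is controlled.

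\textbf{What is shared.} Both arguments pick the row whose permuted first-component score equals the outlier's, so the first-component contribution cancels exactly.

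\textbf{The paper's route.} The paper bounds the remainder by $2(p-1)d_2^2$. It then controls $d_2^2$ geometrically. Lemma~\ref{lem:pc_direction} (built on the variance extremes of Lemma~\ref{lem:variance_lemma}, and requiring $\|x_{n+1}\|>(n+2)M$) shows that the leading PC direction is aligned with $x_{n+1}$ up to $1-2M/\|x_{n+1}\|$. This gives $(v_2'x_{n+1})^2\le 4M(\|x_{n+1}\|-M)$ and hence $d_2^2\le M\{(n-4)M+4\|x_{n+1}\|\}$. That bound is of order $M\|x_{n+1}\|$ and produces exactly the stated rate.

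\textbf{Your route.} You bound the whole tail energy $\sum_{k\ge 2}d_k^2$ at once via Eckart--Young, using the centred rank-one competitor $(e_{n+1}-1_{n+1}/(n+1))x_{n+1}'$. This needs no alignment lemma, no lower bound on $\|x_{n+1}\|/M$, and no factor $p-1$. It gives the non-asymptotic, dimension-free bound $16(n+1)M^2/\|x_{n+1}\|^2$.

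\textbf{What each buys.} Your bound is strictly sharper than the theorem: it shows the true rate is $\mathcal{O}(M^2/\|x_{n+1}\|^2)$. The paper's $\mathcal{O}(M/\|x_{n+1}\|)$ is an artefact of the first-order alignment estimate $1-(1-2M/\|x_{n+1}\|)^2\approx 4M/\|x_{n+1}\|$, whereas your argument shows $d_2^2=\mathcal{O}(M^2)$. The paper's route makes the mechanism explicit: the outlier captures the leading PC, and its score is copied verbatim. Your argument shows that this alignment is not actually needed; verbatim copying plus a small tail suffices.

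\textbf{Two minor remarks.} First, your centred constant can be tightened. Write $\tilde X=X-1_{n+1}\Bar{x}'$ and $C=I_{n+1}-1_{n+1}1_{n+1}'/(n+1)$. Then $\tilde X-A=C(X-e_{n+1}x_{n+1}')$, and since $C$ is a projection, $\sum_{k\ge 2}d_k^2\le nM^2$, just as in the uncentred case. Second, your sign-flip worry does not arise. $\mathcal{S}(X)$ is defined through column permutations only, and the SVD's sign ambiguity flips $u_k$ and $v_k$ jointly, so it cancels.
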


From a practical viewpoint, Theorem \ref{theo:motivating_result} indicates that spectral anonymization should not be used when there is any likelihood of an outlier in the data, as this will be faithfully reproduced in the anonymized data, compromising privacy.

As a solution to the problem raised in Theorem \ref{theo:motivating_result}, the current work proposes \textit{invariant coordinate selection anonymization} (\metodi{}), a novel anonymization method that, like SA, performs the anonymization in the latent spectral space but, contrary to SA, is resistant to the effect of outliers. We achieve this by replacing the principal component decomposition used in SA with the invariant coordinate selection (ICS) transformation \cite{tyler2009invariant}. ICS has been previously used as a pre-processing tool for, e.g., clustering \cite{fischer2017subgroup,alfons2024tandem} and outlier detection \cite{ruiz2022detecting,archimbaud2025ics}, but to the best of our knowledge this is the first time it has been combined with anoymization.

ICS maps data into a latent space similarly to PCA, but instead of maximizing variance, it selects the latent space using three tuning ``parameters'', $T(X), S_1(X), S_2(X)$, which are essentially functions of the input data. Choosing functions with specific characteristics leads to the ICS-transformation inheriting the same properties. In particular, by choosing $T(X), S_1(X), S_2(X)$ that are resistant to outliers leads to the transformation itself being unaffected by outliers as well. This allows us to perform the anonymization in an outlier-resistant latent space, after which a back-transformation to the original scale yields anonymized data with, as we show in this work, improved privacy for outlying observations and with minimal loss, and sometimes even an improvement, on utility.

The main contributions of this work are: (i) Our Theorem \ref{theo:motivating_result} formalizes the breaking-down of SA under outliers. (ii) As an outlier-resistant alternative to SA, we propose a novel anonymization method, \metodi{}, and discuss its tuning parameter selection. (iii) We empirically compare the privacy--utility trade-off of \metodi{} with that of SA from the viewpoint presented in the beginning of Section~\ref{sec:introduction} under wide simulation scenarios. (iv) Using a clinical benchmark dataset \cite{WBCD_data}, we illustrate the practical performance of \metodi{} and demonstrate that it attains a superior privacy--utility trade-off relative to SA.

The paper is organized as follows. In Section~\ref{sec:methods} we review the original SA and describe the mechanics behind \metodi{}. Simulations and a data example comparing SA and \metodi{} in both utility privacy are collected to Section~\ref{sec:results} and in Section~\ref{sec:discussion} we conclude with discussion. The source code for this work is available at \url{https://gitlab.utu.fi/kakype/icsa}.

\section{Methods}\label{sec:methods}

\subsection{Spectral anonymization}

Let $X \in (x_1, \ldots, x_n) \in \mathbb{R}^{n \times p}$ be an observed data matrix and denote the sample mean vector and sample covariance matrix of the data by $\Bar{x}$ and $S$, respectively. In SA, the eigenvector matrix $V \in \mathbb{R}^{p \times p}$ of $S$ is first computed and the $n \times p$ matrix of centered principal components is formed as $Z = (X - 1_n \Bar{x}') V$ where $1_n \in \mathbb{R}^n$ is a vector of ones. Then, $Z^* \in \mathbb{R}^{n \times p}$ is obtained by separately applying a random permutation to each column of the matrix $Z$. Spectrally anonymized data are then reconstructed as $X^* = Z^* V' + 1_n \Bar{x}'$, see \cite{lasko2009spectral} and also \cite{perkonoja2024asymptotic} for two variants of this procedure.


By performing the permutations in the spectral PCA-space, SA approximately retains the moment structure of the original data while simultaneously masking the observed individuals with the permutation operation; the cost of this approximation to asymptotic efficiency was established in \cite{perkonoja2024asymptotic}. However, as shown in Theorem \ref{theo:motivating_result} in Section \ref{sec:introduction}, the masking operation is compromised when the data contain strong outliers, prompting us to propose \metodi{}, described next.

\subsection{Invariant coordinate selection anonymization}

ICS involves selecting three tuning functions: the \textit{location vector} $T(X) \in \mathbb{R}^p$ and two symmetric and positive definite \textit{scatter matrices} $S_1(X), S_2(X) \in \mathbb{R}^{p \times p}$. Location vectors measure the central tendency of a data set and can be seen as generalizations of the sample mean vector $\Bar{x}$, whereas scatter matrices measure different aspects of variation of $X$ (classical example is the sample covariance matrix $S$). In ICS, these functions are assumed to be \textit{affine equivariant}, that is, when the data undergo a transformation $(x_1, \ldots, x_n) \mapsto (Ax_1 + b, \ldots, Ax_n + b)$ for any invertible $A \in \mathbb{R}^{p \times p}$ and any $b \in \mathbb{R}^p$, the location and scatters transform as $T(X) \mapsto A T(X) + b$, $S_1(X) \mapsto A S_1(X) A'$ and $S_2(X) \mapsto A S_2(X) A'$. Various choices for these tuning functions and their selection is described in detail in Section \ref{subsec:parameters} below.

In ICS, the data are first standardized: $X_{\mathrm{st}} = \{ X - 1_n T(X)' \} S_1^{-1/2}(X)$ where $S_1^{-1/2}(X)$ denotes the symmetric inverse square root of $S_1(X)$. Intuitively, this first step centers the data $X$ and removes all variation measured by $S_1(X)$. Next, the second scatter matrix $S_2(X_{\mathrm{st}})$ is computed on the standardized data and its eigenvector matrix $V \in \mathbb{R}^{p \times p}$ is obtained. This is used to transform the standardized data into the latent \textit{ICS-components},
\begin{align*}
    Z = X_{\mathrm{st}} V =  \{ X - 1_n T(X)' \} S_1^{-1/2}(X) V \in \mathbb{R}^{n \times p}.
\end{align*}
This final step rotates $X_{\mathrm{st}}$ such that the variation measured by $S_2(X_{\mathrm{st}})$ is maximized along the new coordinate axes. By the affine equivariance of $T(X), S_1(X)$, $S_2(X)$ described earlier, the ICS-components $Z$ are unaffected by the coordinate system of the original data. That is, starting from $x_1, \ldots, x_n$ or $Ax_1 + b, \ldots, Ax_n + b$ yields the same $Z$. Practically, this implies that the latent components in $Z$ often reveal ``hidden'' patterns that are independent of the used coordinate system and that were not visible in the original data, see the examples in \cite{tyler2009invariant}. As a framework, ICS unifies several classical multivariate analysis methods. E.g., PCA, discriminant analysis and canonical coordinate analysis are all obtained from the above by selecting $T(X), S_1(X), S_2(X)$ appropriately \cite{tyler2009invariant}.

To perform \metodi{}, we achieve anonymization in the latent space by randomly permuting every column of $Z$ (as in SA) to obtain $Z^* \in \mathbb{R}^{n \times p}$. The result is then back-transformed to the original data scale as
\begin{align*}
    X^* = Z^* V' S_1^{1/2}(X) + 1_n T(X)'
\end{align*}
to obtain the final \metodi{}-anonymized data $X^*$.



\metodi{} can be used for data sets $X$ containing also binary or class variables (when encoded as binary variables). However, then it might happen that the corresponding variables in $X^*$ are no longer binary, in which case we discretize these columns to 0/1 according to the proportions taken from the original data.


\subsection{Parameter selection}\label{subsec:parameters}

We first note that the classical SA is obtained as a special case of \metodi{}, when $T(X) = \Bar{x}$, $S_1(X) = I_p$ and $S_2(X) = S$ where $S$ denotes the covariance matrix of $X$. In particular, the choice of $S_1(X)$ in SA is completely independent of the data, showing that SA does not use the full extent of the ICS-framework, and more elaborate results can be expected with other, less trivial choices of $S_1$.

Scatter matrices are typically divided into three groups, based on how well they tolerate outliers, see \cite{tyler1982radial} for more technical definitions:
\begin{itemize}
    \item Class I scatters break down already under a single bad enough outlier and this group includes, e.g., the covariance matrix and the covariance matrix of fourth moments. A natural choice of location vector $T$ for class I scatter is the equally non-robust mean vector $\Bar{x}$.
    \item Class II contains scatters that are ``moderately'' tolerant to outliers. That is, a single outlier cannot break a class II scatter but large groups will. This class includes, e.g., Tyler's shape matrix \cite{tyler1987distribution} and the Hettmansperger-Randles~\mbox{(H-R)} estimator \cite{hettmansperger2002practical}, the latter of which automatically produces also a location vector.
    \item Class III contains highly tolerant estimators that can continue to function even if the data contain up to 50\% outliers and contaminated observations. The classical example of such a scatter is the minimum covariance determinant (MCD) \cite{hubert2018minimum} which also produces an accompanying location vector.
\end{itemize}

Currently, a variety of scatter matrix implementations are available, for example, in the following R packages: \texttt{rrcov} \cite{rrcov}, \texttt{ICSNP} \cite{ICSNP}, \texttt{robustbase} \cite{robustbase}, and \texttt{ICS} \cite{ICS}.


\subsection{Simulation study}\label{sec:simulation}

To evaluate the robustness of \metodi{} and SA under finite data, we conducted a simulation study designed to assess the methods' ability to protect the privacy of outlying observations while maintaining the utility of the anonymized dataset for subsequent linear regression analysis. 

The simulation included $1,000$ independent replications across sample sizes $n \in \{20, 40, 120, 240, 480\}$, dimensionalities $(p + 1) \in \{4, 8, 16, 32\}$ and outlier severity $\kappa \in \{0, 2, 4, 8, 16\}$. For each replication, we generated a ``clean'' dataset $(X, y)$ where the feature matrix $X \in \mathbb{R}^{n \times p}$ and the response $y \in \mathbb{R}^n$. The response was generated via linear model:
\begin{equation*}
    y = X \beta + \epsilon, \quad \epsilon \sim \mathcal{N}(0, 1)
\end{equation*}
where the coefficients $\beta_j = (-1)^{j+1}/\sqrt{j}$ for $j \in \{1, \dots, p\}$. The feature matrix $X$ and outliers in $y$ were generated according to two distinct structures:

\paragraph{Scenario 1 (independent features, one response outlier)}
The features were sampled from $\mathcal{N}_p(0, \Sigma)$ with a covariance matrix $\Sigma = \text{diag}(p, p-1, \dots, 1)$. Only the final observation was modified into an outlier: $\tilde y_n = y_n + \kappa + \delta$, where $\delta \sim \mathcal{N}(0, 0.4^2)$. This represented a situation where a single subject's uniqueness poses an extreme re-identification risk.

\paragraph{Scenario 2 (correlated features, cluster of response outliers)}
The covariance matrix was set to $\Sigma = Q \Lambda Q'$, with $Q$ being a random orthogonal matrix. The diagonal matrix $\Lambda$ contained decaying eigenvalues generated by $\lambda_j = \text{sort}\left(\frac{p-j+1}{p} + u_j\right)$, where $u_j \sim \mathcal{U}(0,0.05)$. For a 10\% subset $O$ of the sample, we shifted the response: $\tilde y_i = y_i + \kappa + \delta_i$ for $i \in O$, where $\delta_i \sim \mathcal{N}\left(0,0.4^2\right)$. This scenario simulated structured anomalies such as batch effects or demographic sub-clusters.

For \metodi{}, we evaluated eight scatter matrix pairings $(S_1, S_2)$, where the location vector was determined by the choice of $S_1$. The configurations included: 
\begin{itemize}
    \item Class I+I: Covariance matrix and covariance matrix of fourth moments.
    \item Class II+I: H-R estimator and covariance matrix.
    \item Class II+II: H-R estimator and Tyler's shape matrix.
    \item Class III+I: MCD (50\% or 75\%) and covariance matrix.
    \item Class III+II: MCD (50\% or 75\%) and H-R estimator.
    \item Class III+III: MCD 50\% and MCD 75\%.
\end{itemize}

To evaluate the privacy, we used outlier replication error (ORE), calculated as the average normalized squared distance between each original outlier and its nearest neighbor in the anonymized data:
$$
\text{ORE} = \frac{1}{|O|} \sum_{k \in O} \min_{i = 1, \ldots, n} \left( \frac{\| \tilde{z}_{k} - z_i^* \|^2}{\| \tilde{z}_{k} \|^2} \right),
$$ where $\tilde{z}_k$ denotes an original outlying row $(x_k,y_k)$, $z_i^*$ is an anonymized row, and $O$ is the set of outlier indices. Per Theorem~\ref{theo:motivating_result}, a low ORE indicates that outliers are faithfully replicated, compromising privacy; therefore, higher values represent better protection.

To assess the preservation of utility, we measured the Euclidean distance between the true coefficients $\beta$ and the estimates $\hat{\beta}^*$ derived from the anonymized dataset using ordinary least squares: $\| \beta - \hat{\beta}^* \|$. Lower values indicate that the anonymization process better maintains the data's utility.

\subsection{Empirical data example}

To evaluate the performance of \metodi{} on real-world data, we utilized the Breast Cancer Wisconsin Diagnostic Database \cite{WBCD_data}, which contains nuclear characteristics for breast cancer diagnosis. Specifically, we employed a processed version of the dataset provided by \cite{Campos2016}, consisting of 30 numeric attributes for 367 subjects, including ten identified outliers ($2.72\%$) and 357 inliers ($97.28\%$). 

The downstream task was designed to model the final ten variables, which represent extreme cell behavior (``worst'' measurements), using the first 20 variables, which measure mean cell characteristics and heterogeneity, as predictors. This setup addressed the research question: ``Can average measurements accurately predict extreme cell characteristics?''

We modeled this relationship using Lasso regression with 10-fold \sloppy cross-validation, generating anonymized data sets 2,000 times with both \metodi{} and SA. Utility was quantified using the Euclidean distance $\| \hat\beta - \hat\beta^\ast\|$. Furthermore, we assessed the stability of variable selection by comparing the predictors selected from the anonymized data against those selected from the original dataset. This comparison utilized recall, false positive rate (FPR), precision and Jaccard index (Table~\ref{tab:metrics}). Privacy protection was quantified using ORE.

\begin{table}[ht]
\centering
\caption{Stability metrics for variable selection.}
\label{tab:metrics}
\begin{tabular}{lll}
\toprule
\textbf{Metric} & \textbf{Formula} & \textbf{Definition} \\ 
\midrule
Recall & $\frac{TP}{TP + FN}$ & \begin{tabular}[c]{@{}l@{}}The proportion of original predictors correctly \\ retained in the model fitted to anonymized data.\end{tabular} \\ 
\addlinespace
\begin{tabular}[c]{@{}l@{}}False Positive\\ Rate\end{tabular} & $\frac{FP}{FP + TN}$ & \begin{tabular}[c]{@{}l@{}}The rate at which original data zero coefficients\\ were incorrectly estimated as non-zero.\end{tabular} \\ 
\addlinespace
Precision & $\frac{TP}{TP + FP}$ & \begin{tabular}[c]{@{}l@{}}The probability that a selected predictor was a\\ true predictor from the original set.\end{tabular} \\ 
\addlinespace
\begin{tabular}[c]{@{}l@{}} Jaccard\\ Index \end{tabular} & $\frac{TP}{TP + FP + FN}$ & \begin{tabular}[c]{@{}l@{}}The intersection over union, measuring the total\\ overlap in model architecture.\end{tabular} \\
\bottomrule
\multicolumn{3}{l}{\scriptsize TP: true positive; FN: false negative; FP: false positive; TN: true negative}\
\end{tabular}
\end{table}

To quantify the privacy-utility trade-off, we defined relative privacy efficiency (RPE):
$$
\mathrm{RPE} = \frac{\sqrt{\mathrm{ORE}}}{\text{Utility metric}}
$$
where the utility metric was the Euclidean distance, FPR and $(1 - \text{metric})$ for recall, precision, and the Jaccard index. Large values of RPE are beneficial and this metric effectively captures the ``return on investment'', specifically, how much privacy is gained for every unit of utility sacrificed. 

We compared \metodi{} (utilizing MCD $50\%$ and $75\%$ scatter matrices) against SA. To facilitate a direct comparison between the two methods, we calculated a ratio of their respective RPEs. A ratio exceeding $1.0$ indicates that \metodi{} offers better privacy-utility exchange, whereas a ratio below $1.0$ favors SA. Finally, we assessed the stability of these results by calculating 95\% bootstrap confidence intervals ($B = 2,000$). An interval entirely above the $1.0$ threshold demonstrates a significant performance advantage for \metodi{} and vice versa for SA.


\section{Results}\label{sec:results}

\paragraph{Simulated data} Figure~\ref{fig:sim_results1} presents the simulation results for Scenario 1. Across all combinations of $n$ and $p$, \metodi{} employing more robust, that is, more tolerant, scatter matrices achieves greater privacy, i.e., higher outlier replication errors (consistently positioned further to the right on the x-axis). In contrast, \metodi{} utilizing at least one class I scatter matrix tends to perform worse than SA in terms of ORE (often located to the left of SA). The superior performance of \metodi{} employing class II and III scatter matrices becomes particularly pronounced as outlier severity increases.

\begin{figure}[!ht]
    \centering
    \includegraphics[width=\linewidth]{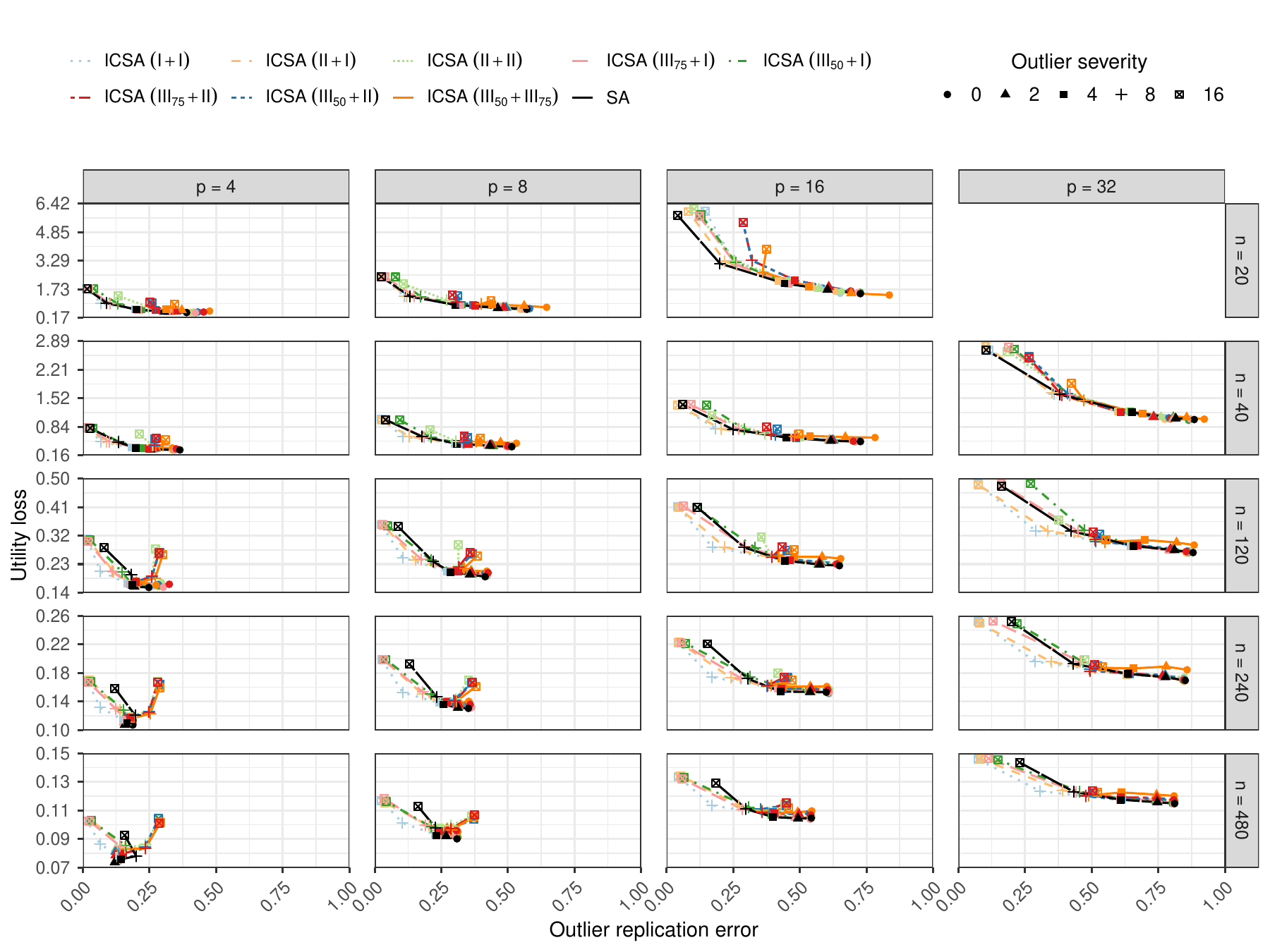}
    \caption{Utility loss versus outlier replication error in Scenario 1 across 1,000 simulations. Panels display combinations of $n$ and $p$ (varying y-axis). Lines denote methods, and point shapes indicate outlier severity (distance from the mean). Lower utility loss and higher replication error are preferred (bottom right). As outlier severity increases, methods employing more robust scatter matrices tend to exhibit superior performance.}
    \label{fig:sim_results1}
\end{figure}

In general, increasing the sample size leads to improved utility (i.e., lower utility loss), while increasing dimensionality enhances privacy (i.e., higher ORE). This behavior is expected, as in this scenario only a single response outlier is present; consequently, its influence on the overall data diminishes with increasing $n$ and $p$. Surprisingly, in many settings, \metodi{} employing class II or III scatter matrices also achieves improved utility in addition to enhanced privacy, compared to SA.

In Scenario 2 (Figure~\ref{fig:sim_results2}), both utility and privacy deteriorate in comparison to Scenario 1, with the decline in utility being more pronounced than that in privacy. This is expected since utility is evaluated via a linear regression model, which is more sensitive to widespread contamination than the privacy metric. Overall, the differences between methods are less marked. The only exception is \metodi{} employing two class III scatter matrices, which exhibits superior performance, particularly under the highest level of outlier severity.

\begin{figure}[p]
    \centering
    \includegraphics[width=\linewidth]{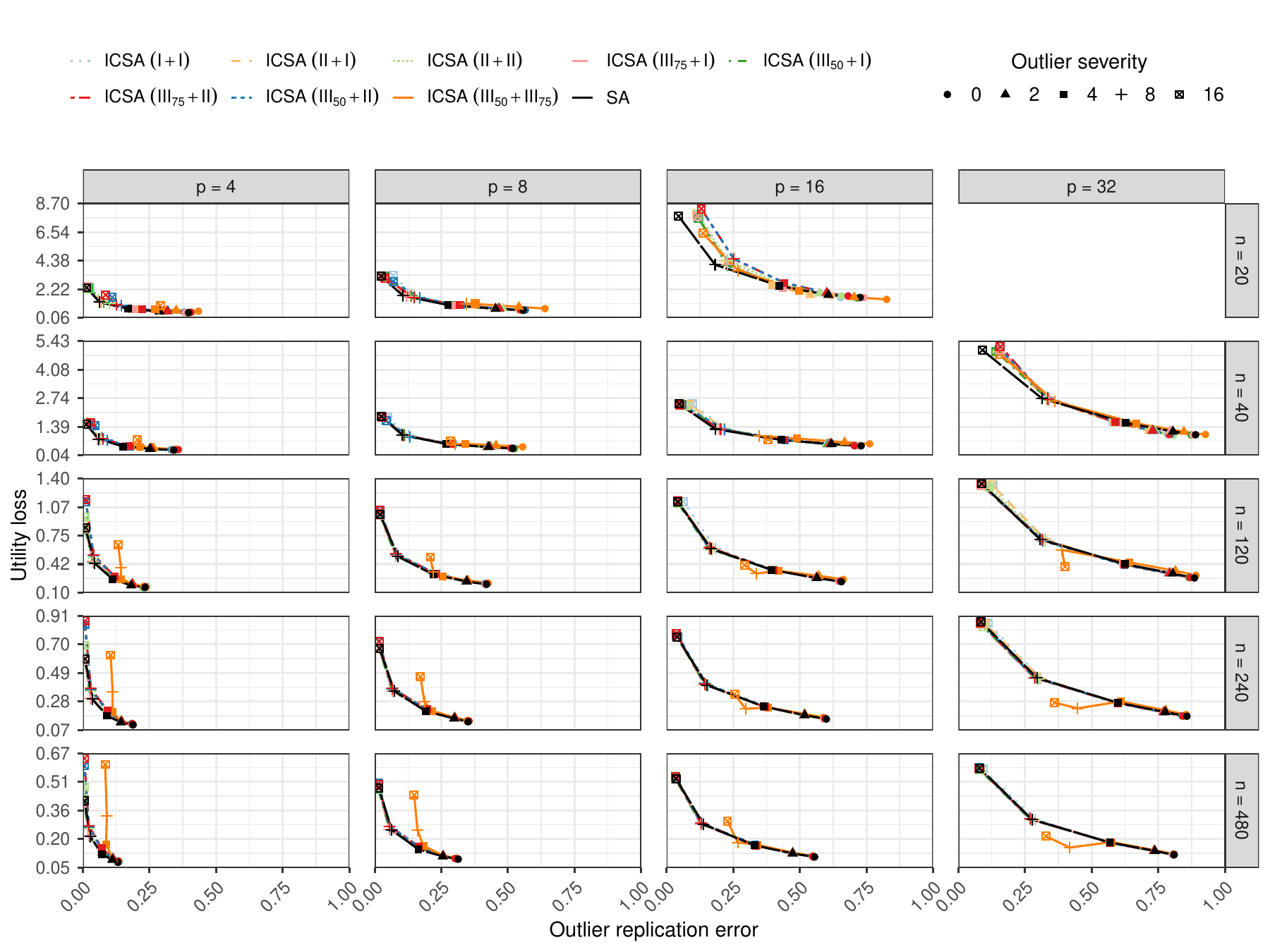}
    \caption{Utility loss versus outlier replication error in Scenario 2 across 1,000 simulations. Panels display combinations of $n$ and $p$ (varying y-axis). Lines denote methods, and point shapes indicate outlier severity (distance from the mean). Lower utility loss and higher replication error are preferred (bottom right). \metodi{} with two class III scatter matrices exhibits a clear advantage under high outlier severity.}
    \label{fig:sim_results2}
\end{figure}

\begin{table}[p]
\caption{RPE ratios between \metodi{} and SA based on real-world data, with 95\% bootstrap confidence intervals. Values above 1 indicate a more favourable privacy--utility trade-off for \metodi{}, whereas values below 1 favour SA.}
\label{tab:rwe_results}
\scriptsize
\setlength{\tabcolsep}{2pt}
\renewcommand{\arraystretch}{1.3}
\centering
\begin{tabular}{lccccc}
\toprule
\textbf{Variable}
& \textbf{Distance}
& \textbf{Recall}
& \textbf{FPR}
& \textbf{Precision}
& \textbf{Jaccard} \\
\midrule
Radius        & 2.98 (2.85, 3.11) & 3.56 (3.36, 3.77) & 4.79 (4.55, 5.05) & 4.65 (4.45, 4.86) & 4.31 (4.12, 4.50) \\
Texture       & 3.92 (3.77, 4.08) & 5.75 (5.38, 6.14) & 4.74 (4.54, 4.94) & 4.98 (4.80, 5.17) & 5.26 (5.06, 5.46) \\
Perimeter     & 3.20 (3.08, 3.33) & 2.87 (2.65, 3.10) & 3.12 (3.00, 3.25) & 3.33 (3.21, 3.46) & 3.36 (3.24, 3.48) \\
Area          & 1.89 (1.82, 1.96) & 1.12 (1.02, 1.24) & 3.58 (3.43, 3.74) & 3.33 (3.20, 3.46) & 2.97 (2.87, 3.07) \\
Smoothness    & 1.16 (1.12, 1.21) & 4.36 (4.06, 4.70) & 3.79 (3.60, 3.98) & 3.97 (3.79, 4.16) & 4.06 (3.88, 4.23) \\
Compactness   & 1.69 (1.62, 1.76) & 1.72 (1.58, 1.89) & 4.24 (4.06, 4.41) & 4.17 (4.01, 4.34) & 3.85 (3.70, 4.01) \\
Concavity     & 1.20 (1.16, 1.25) & 2.54 (2.37, 2.72) & 2.84 (2.72, 2.97) & 2.71 (2.59, 2.83) & 2.78 (2.67, 2.90) \\
Concave p.    & 2.61 (2.50, 2.72) & 2.74 (2.53, 2.98) & 6.75 (6.45, 7.06) & 6.75 (6.48, 7.02) & 6.03 (5.81, 6.26) \\
Symmetry      & 3.03 (2.91, 3.15) & 1.81 (1.69, 1.92) & 6.10 (5.80, 6.39) & 5.03 (4.81, 5.26) & 3.32 (3.19, 3.45) \\
Fractal dim.  & 0.74 (0.71, 0.78) & 1.97 (1.82, 2.14) & 5.41 (5.18, 5.64) & 5.18 (4.98, 5.39) & 4.33 (4.18, 4.51) \\
\midrule
Overall       & 2.88 (2.81, 2.95) & 2.33 (2.27, 2.39) & 3.91 (3.85, 3.97) & 3.95 (3.89, 4.01) & 3.58 (3.53, 3.63) \\
\bottomrule
\end{tabular}
\end{table}

\paragraph{Empirical data} In real-world analysis (results are reported as mean (SD)) \metodi{} provided substantially higher privacy protection than SA, with an ORE of 0.00490 (0.0044) vs. 0.00104 (0.0009). While SA recorded higher raw utility in Euclidean distance, 267.68 (759.82) vs. 436.69 (1208.80), and parameter selection, specifically recall 0.942 (0.070) vs. 0.884 (0.118); precision 0.810 (0.114) vs. 0.774 (0.105); Jaccard similarity 0.767 (0.102) vs. 0.694 (0.101); and FPR 0.529 (0.286) vs. 0.634 (0.264), these metrics feature high variance and overlapping intervals. Crucially, when evaluating the privacy--utility trade-off via relative privacy efficiency ratios (Table~\ref{tab:rwe_results}), \metodi{} outperforms SA in 49 out of 50 cases, demonstrating its definitive superiority in optimizing both objectives simultaneously.

\section{Discussion}\label{sec:discussion}

This work considered data anonymization in the presence of outliers and introduced \metodi{} as a robust alternative to spectral anonymization. Our theoretical result established that the masking mechanism of SA breaks down under strong outlying observations, providing a formal motivation for robustifying the latent-space transformation used for anonymization. By replacing the PCA decomposition of SA with invariant coordinate selection, \metodi{} performs the anonymization in a latent space that can be made resistant to contamination through the choice of scatter matrices.

The simulation study demonstrated that the privacy--utility performance of \metodi{} depends strongly on the robustness properties of the selected scatter matrices. In Scenario 1, where contamination was limited to a single outlier, \metodi{} employing more robust class II and III scatter matrices consistently achieved higher outlier replication error than SA, with differences becoming increasingly pronounced as outlier severity increased. Notably, these privacy gains were often accompanied by comparable or even improved utility, indicating that privacy protection need not be achieved at the expense of analytical usefulness. In contrast, implementations involving at least one non-robust class I scatter matrix tended to offer little advantage over SA. 

This intermediate performance of SA is also intuitive from a robustness perspective, as one of the scatter matrices implicitly used by SA is the identity matrix, which is entirely independent of the observed data and therefore unaffected by contamination. Consequently, SA inherits a degree of robustness through this fixed component, although its remaining PCA-based structure remains sensitive to outliers. This helps explain why SA was often observed between clearly non-robust and strongly robust \metodi{} implementations.

In Scenario 2, where contamination was more pervasive, both privacy and utility deteriorated for all methods, with utility being more strongly affected. This is consistent with the fact that utility was evaluated through linear regression, which is sensitive to widespread contamination. Under this more challenging setting, differences between methods were generally smaller, although \metodi{} employing two class III scatter matrices retained a clear advantage under the highest outlier severity. Taken together, these findings suggest that the relative benefit of robust anonymization is greatest when outliers are severe and when sufficiently robust scatter estimators are used.

The empirical analysis based on the Wisconsin Breast Cancer Diagnostic dataset \cite{WBCD_data} further illustrated the practical relevance of the proposed approach. Although SA achieved better mean values for several utility-oriented selection metrics, \metodi{} provided substantially stronger privacy protection. Moreover, when privacy and utility were jointly evaluated through the relative privacy efficiency criterion, \metodi{} outperformed SA in almost all variables and metrics. This highlights an important practical point: comparisons based solely on utility summaries may overlook gains in disclosure protection that are substantial when evaluated jointly. Ultimately, if outlier replication is sufficiently high to create a risk of identity disclosure, superior utility preservation alone does not render a method acceptable, as the anonymized data cannot be safely released.

The present study has several limitations. First, only two contamination scenarios and one benchmark dataset were considered, and broader empirical validation would be valuable. Second, utility was assessed through regression- and selection-based criteria; other downstream tasks such as classification, clustering, or causal estimation may yield different conclusions. Third, the performance of \metodi{} depends on the choice of scatter matrices. Although we outlined principled options based on robustness classes and existing implementations, practical guidance for data-specific selection warrants further investigation.

Future research could consider extensions of \metodi{} to high-dimensional settings, mixed data types, and nonlinear latent representations. It would also be of interest to study formal privacy guarantees under robust latent-space anonymization and to compare the proposed approach with modern privacy-preserving frameworks such as differential privacy \cite{Dwork2006_DP}. Overall, the results indicate that explicitly accounting for outliers can materially improve anonymization performance and that robust latent-space methods such as \metodi{} offer a promising direction for privacy-preserving data release.

\begin{credits}
\subsubsection{\ackname}  We would like to thank Klaus Nordhausen for his recommendations regarding the various available scatter matrices. The work of KP and JV was supported by the Research Council of Finland (grants 347501, 353769, 368494). The work of KP was supported by the Finnish Cultural Foundation (grant 00260122).

\subsubsection{\discintname}
The authors have no competing interests to declare that are relevant to the content of this article.
\end{credits}

\appendix

\section{Auxiliary lemmas and proof of Theorem \ref{theo:motivating_result}}\label{sec:proofs}

This appendix uses the following formulation of SA (for a data set of size $(n + 1) \times p$), which is completely equivalent to the PCA-based definition given in Section \ref{sec:methods}. Let a centered data matrix $X - 1_{n + 1} \Bar{x}'$ have the singular value decomposition $U D V'$. Then, the spectrally anonymized data are $X^* = U^* D V' + 1_{n + 1} \Bar{x}'$ where $U^*$ is obtained from $U$ by randomly permuting the elements of each of its columns.

\begin{lemma}\label{lem:variance_lemma}
    Let $x_1, \ldots, x_{n + 1} \in \mathbb{R}$ be such that $x_1, \ldots, x_n \in [-M, M]$ and $x_{n + 1} = H > 0$ for some fixed $M, H > 0$.
    \begin{itemize}
        \item[(i)] If $H > nM$, maximal sample variance $s^2(x)$ is achieved by taking $x_1 = \cdots = x_n = -M$, in which case
        \begin{align*}
            s^2(x) = \frac{n}{(n + 1)^2} (H + M)^2 
        \end{align*}
        \item[(ii)] The sample variance satisfies
        \begin{align*}
            s^2(x) \leq \frac{1}{n + 1} (nM^2 + H^2).
        \end{align*}
        \item[(iii)] Minimal sample variance is achieved by taking $x_1 = \cdots = x_n = M$, in which case
        \begin{align*}
            s^2(x) = \frac{n}{(n + 1)^2} ( H - M )^2.
        \end{align*}
    \end{itemize}
\end{lemma}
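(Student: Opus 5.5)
The plan is to treat the sample variance (with divisor $n+1$, which is the normalisation under which the formulas in (i) and (iii) hold) as a function of the free coordinates $(x_1,\ldots,x_n)\in[-M,M]^n$, with $x_{n+1}=H$ fixed. I will work with the identity
\begin{align*}
    s^2(x) = \frac{1}{n+1}\sum_{i=1}^{n+1} x_i^2 - \Bigl(\frac{1}{n+1}\sum_{i=1}^{n+1} x_i\Bigr)^2 ,
\end{align*}
and use that $s^2$ is a convex quadratic in $(x_1,\ldots,x_n)$. It is convex because it equals $\frac{1}{n+1}\|P x\|^2$, where $P$ is the centering projection.

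For (ii), I simply drop the nonnegative squared mean: $s^2(x)\le \frac{1}{n+1}\sum_i x_i^2 \le \frac{1}{n+1}(nM^2+H^2)$, using $x_i^2\le M^2$ for $i\le n$.

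For (i), convexity implies that the maximum over the box is attained at a vertex. So it suffices to consider $k$ coordinates equal to $-M$ and $n-k$ equal to $M$. Plugging into the identity gives
\begin{align*}
    s^2 = \frac{nM^2+H^2}{n+1} - \frac{\bigl((n-2k)M+H\bigr)^2}{(n+1)^2}.
\end{align*}
Maximising this over $k\in\{0,\ldots,n\}$ amounts to minimising $|(n-2k)M+H|$. Since $H>nM$, the quantity $(n-2k)M+H\ge H-nM>0$ for every $k$, and it is decreasing in $k$, so the optimum is $k=n$. Substituting $k=n$ and simplifying recovers $\frac{n}{(n+1)^2}(H+M)^2$. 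As a cross-check, this is the two-point variance with weights $n/(n+1)$ and $1/(n+1)$ and gap $H+M$.

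For (iii), I minimise a convex function over a box, so verifying the KKT conditions at $x_1=\cdots=x_n=M$ suffices. The partial derivative is $\partial s^2/\partial x_i = \frac{2}{n+1}(x_i-\bar x)$. At the candidate point $\bar x=(nM+H)/(n+1)$, so the partial derivative is $\frac{2}{n+1}\cdot\frac{M-H}{n+1}\le 0$ whenever $H\ge M$. A nonpositive gradient at the upper bound of every coordinate is exactly the KKT condition, so the candidate is a global minimiser, and the two-point formula gives $\frac{n}{(n+1)^2}(H-M)^2$.

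The main subtlety is that (iii) genuinely requires $H\ge M$. If $H<M$, setting all $x_i=H$ gives variance $0$, which is smaller. I will therefore state this assumption explicitly; it holds automatically in the regime $H>nM$ relevant to Theorem~\ref{theo:motivating_result}. Everything else is routine algebra.
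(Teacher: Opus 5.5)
Your proposal is correct. Parts (i) and (ii) follow the paper's argument essentially verbatim. The paper also establishes convexity, via the Hessian rather than your centering-projection identity, then applies the Bauer maximum principle to reduce to sign vectors and minimizes $(M\sum_i s_i + H)^2$. Part (ii) is the same ``drop the squared mean'' bound.

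For (iii) you take a different route. With $m = \frac{1}{n}\sum_{i\le n} x_i$, the paper uses the decomposition
\begin{align*}
(n+1)s^2(x) = \sum_{i=1}^n (x_i - m)^2 + \frac{n}{n+1}(H - m)^2
\end{align*}
and minimizes the two terms separately: all $x_i$ equal, and $m$ as close to $H$ as the constraint allows. You instead reuse the convexity from (i) and check the first-order (KKT) condition at the vertex $M1_n$.

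The decomposition is more elementary and identifies the minimizer in every case: all $x_i$ equal to $\min(H,M)$. Your route recycles machinery already set up for (i), and it makes the sign condition $M - H \le 0$ appear explicitly.

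Your caveat is well taken. The paper's step ``the second term is minimized when $m = M$'' silently requires $H \ge M$, so (iii) as stated fails for $0 < H < M$. This is harmless downstream, since Lemma~\ref{lem:pc_direction} only invokes (iii) with $H > (n+2)M$.
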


\begin{proof}[Lemma \ref{lem:variance_lemma}]
    Let $x := (x_1, \ldots, x_n)'$. For result (i), we note that the sample variance
    \begin{align*}
        s^2(x) = \frac{1}{n + 1} (x'x + H^2) - \left\{ \frac{1}{n + 1} (x'1_n + H) \right\}^2
    \end{align*}
    has the derivatives
    \begin{align*}
        \frac{\partial}{\partial x} s^2(x) &= \frac{2 x}{n + 1} - \frac{2}{(n + 1)^2} (x'1_n + H)  1_n \\
        \frac{\partial^2}{\partial x \partial x'} s^2(x) &= \frac{2}{n + 1} \left( I_n - \frac{1}{n} 1_n 1_n' \right) + \frac{2}{n(n + 1)^2}  1_n 1_n',
    \end{align*}
    where the matrix of second derivatives is expressed as a sum of an orthogonal projector and its complement, making it positive definite. Hence, $s^2(x)$ is convex and by the Bauer maximum principle attains its maximal value when $x_1, \ldots, x_n \in \{ -M, M \}$. Letting $s_1, \ldots, s_n$ denote the corresponding signs, the variance is maximized when $( M \sum_{i = 1}^n s_i + H )^2$
    is minimized. Since $H > nM$, this occurs when all signs are set to $-1$, in which case the variance is
    \begin{align*}
        s^2(x) =& \frac{1}{n + 1} (n M^2 + H^2) - \left\{ \frac{1}{n + 1} (-n M + H) \right\}^2 = \frac{n}{(n + 1)^2} (M + H)^2.
    \end{align*}
    For the result (ii), we observe that $(n + 1) s^2(x) \leq (nM^2 + H^2)$, from which the claim follows.

    For the result (iii), letting $m := (1/n) \sum_{i = 1}^n x_i $, we have $(n + 1) \Bar{x} = n m + H$ and the sample variance can be written as
    \begin{align*}
        (n + 1) s^2(x) =& \sum_{i = 1}^n (x_i - m)^2 + n ( m - \Bar{x} )^2 + (H - \Bar{x})^2 \\
        =& \sum_{i = 1}^n (x_i - m)^2 + \frac{n}{n + 1} ( H - m )^2 
    \end{align*}
    For the first term above, its minimal value is 0, reached when all of $x_i$ are equal, whereas the second term is minimized when $m  = M$. Both conditions are simultaneously true when all $x_1 = \cdots = x_n = M$, implying that the minimal possible variance satisfies
    \begin{align*}
        s^2(x) = \frac{n}{(n + 1)^2} ( H - M )^2,
    \end{align*}
    as claimed.
\end{proof}

\begin{lemma}\label{lem:pc_direction}
    Let $x_1, \ldots, x_{n + 1} \in \mathbb{R}^p$ be such that $\| x_1 \|, \ldots, \| x_n \| \leq M$ and $\| x_{n + 1} \| = H$ for some fixed $M, H > 0$ such that $H > (n + 2)M$. Let $u$ denote the direction of the first sample principal component. Then,
    \begin{align*}
        \left| \frac{x_{n + 1}'}{\| x_{n + 1} \|} u \right| \geq 1 -  \frac{2M}{H}.
    \end{align*}
\end{lemma}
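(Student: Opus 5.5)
The plan is to compare sample variances along two directions: the outlier direction $e := x_{n+1}/H$, and the first principal direction $u$, where $\|u\| = 1$. Replacing $u$ by $-u$ if necessary, I assume $c := e'u \geq 0$; the goal is then $c \geq 1 - 2M/H$. Since $u$ maximizes $v'Sv$ over unit vectors $v$, where $S$ is the sample covariance matrix, we have
\begin{align*}
u'Su \geq e'Se.
\end{align*}
Both sides are sample variances of one-dimensional projections. The projected inliers $e'x_i$ and $u'x_i$, $i \leq n$, lie in $[-M, M]$ by Cauchy--Schwarz. The projected outlier takes the value $e'x_{n+1} = H$, respectively $u'x_{n+1} = cH$. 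So Lemma \ref{lem:variance_lemma} applies to both projections.

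\textbf{Lower bound.} Applying Lemma \ref{lem:variance_lemma}(iii) to the projections $e'x_1, \ldots, e'x_n, H$ gives
\begin{align*}
e'Se \geq \frac{n}{(n+1)^2}(H - M)^2.
\end{align*}
Because $H > (n+2)M$, we have $H - M > (n+1)M$, and therefore this lower bound is strictly larger than $nM^2/(n+1)$.

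\textbf{Upper bound, split on the size of $cH$.} If $cH \leq nM$, Lemma \ref{lem:variance_lemma}(ii) gives
\begin{align*}
u'Su \leq \frac{nM^2 + c^2H^2}{n+1} \leq \frac{nM^2 + n^2M^2}{n+1} = nM^2.
\end{align*}
By the lower bound, $e'Se > nM^2$ in this case, which contradicts $u'Su \geq e'Se$. So this case cannot occur. (This also covers $c = 0$, where Lemma \ref{lem:variance_lemma} would otherwise need $H > 0$.)

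Hence $cH > nM$, and in particular $cH > 0$. Lemma \ref{lem:variance_lemma}(i) then applies with $cH$ in place of $H$. Its maximizing configuration shows that every admissible configuration satisfies
\begin{align*}
u'Su \leq \frac{n}{(n+1)^2}(cH + M)^2.
\end{align*}
Combining this with the lower bound yields $(H - M)^2 \leq (cH + M)^2$. Both quantities inside the squares are positive, so $cH + M \geq H - M$, that is, $c \geq 1 - 2M/H$.

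\textbf{Main obstacle.} The main obstacle is getting the sharp constant $2M$. A cruder argument that uses only Lemma \ref{lem:variance_lemma}(ii), or the decomposition from the proof of part (iii), loses an extra additive $(n+1)M^2$ term and yields only $c \geq 1 - 3M/H$. Using the exact maximizer from part (i) is what gives the sharp bound. It requires $cH > nM$, and securing that condition is exactly what the case split and the assumption $H > (n+2)M$ provide. Since part (i) is stated as the maximal variance, it does give an upper bound for every configuration.
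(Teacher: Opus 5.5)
Your proposal is correct and follows essentially the same route as the paper: you lower-bound the variance along $x_{n+1}/H$ via Lemma~\ref{lem:variance_lemma}(iii), then split on whether the projected outlier exceeds $nM$, ruling out the small case with Lemma~\ref{lem:variance_lemma}(ii) and obtaining $(H-M)^2 \leq (cH+M)^2$ from Lemma~\ref{lem:variance_lemma}(i) in the large case. One small slip: you first state that the lower bound exceeds $nM^2/(n+1)$, but your contradiction in the case $cH \leq nM$ needs the stronger fact that it exceeds $nM^2$ itself, which does hold since $\frac{n}{(n+1)^2}(H-M)^2 > \frac{n}{(n+1)^2}(n+1)^2M^2 = nM^2$.
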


\begin{proof}[Lemma \ref{lem:pc_direction}]
    Letting $v := x_{n + 1}/ \| x_{n + 1} \|$, the projected sample satisfies $x_1'v, \ldots x_n'v \in [-M, M]$ and $x_{n + 1}'v = H$. By Lemma \ref{lem:variance_lemma}, the variance of the data along $v$, denoted by $s^2_v(x)$ thus satisfies
    \begin{align*}
        s^2_v(x) \geq \frac{n}{(n + 1)^2} ( H - M )^2.
    \end{align*}
    Consider then an arbitrary unit-length direction $w \in \mathbb{R}^p$ satisfying, without loss of generality, $w'x_{n + 1} \geq 0$. Reasoning as above, we obtain that
    \begin{align*}
          s^2_w(x) \leq 
    \begin{cases}
      \frac{n}{(n + 1)^2} (w'x_{n + 1} + M)^2, \quad & \text{if $w'x_{n + 1} > nM$},\\
      \frac{1}{n + 1} \{ nM^2 + (w'x_{n + 1})^2 \}, \quad & \text{if $w'x_{n + 1} \leq nM$}.
    \end{cases}  
    \end{align*}
    Take now any $w$ with $w'x_{n + 1} > nM$. Then, the above implies that we have $s^2_v(x) > s^2_w(x)$ if the following holds
    \begin{align*}
        \frac{n}{(n + 1)^2} ( H - M )^2 > \frac{n}{(n + 1)^2} (w'x_{n + 1} + M)^2.
    \end{align*}
    Since both quantities under the squares are positive, this can be written as
    \begin{align*}
        1 - 2 \frac{M}{H} > w'\frac{x_{n + 1}}{\| x_{n + 1} \|}.
    \end{align*}
    Consequently, any $w$ with $w'x_{n + 1} > nM$ cannot be the direction of the leading principal component unless $w'x_{n + 1}/\| x_{n + 1} \| \geq 1 - 2M/H$. 
        
    Taking then any $w$ with $w'x_{n + 1} \leq nM$, the dominance $s^2_v(x) > s^2_w(x)$ is implied if the following inequality is true
    \begin{align*}
        \frac{n}{(n + 1)^2} ( H - M )^2 > \frac{1}{n + 1} \{ nM^2 + (w'x_{n + 1})^2 \}.
    \end{align*}
    We next show that this always holds under our conditions. Using $w'x_{n + 1} \leq nM$, we see that
    \begin{align*}
        \frac{1}{n + 1} \{ nM^2 + (w'x_{n + 1})^2 \} \leq n M^2.
    \end{align*}
    Moreover, $(n + 1)^2 M^2 < (H - M)^2$ is seen to hold thanks to our assumption that $H > (n + 2) M$. As such no $w$ with $w'x_{n + 1} \leq nM$ can be the direction of the leading principal component. The proof is now completed after noting that by symmetry the arguments apply also for vectors $w$ having $w'x_{n + 1} \leq 0$.
\end{proof}

\begin{proof}[Theorem \ref{theo:motivating_result}]
    Let $X \in \mathbb{R}^{(n + 1) \times p}$ denote the original data matrix and let $\Bar{x} = \{ 1/(n + 1) \} X 1_{n + 1}$ be the sample mean. Let $U, D, V$ and $U^*$ be as described in the beginning of Section \ref{sec:proofs} (but with sample size $n + 1$ instead).
    

    Our quantity of interest can then be written as
    \begin{alignat*}{2}
        \| x_{n + 1} - x_i^* \|^2 =& \| (x_{n + 1} - \Bar{x}) - (x_i^* - \Bar{x}) \|^2 &&= \| V D U' e_{n + 1} - V D (U^*)' e_i \|^2 \\
        =& \| D U' e_{n + 1} - D (U^*)' e_i \|^2 &&= \sum_{j = 1}^p d_j^2 (u_{(n + 1)j} - u^*_{ij})^2,
    \end{alignat*}
    where $d_1, \ldots, d_p$ are the diagonal elements of $D$. Let next $i_0$ be the index for which $u_{(n + 1)1} = u^*_{i_0 1}$. This shows that $\min_{i = 1, \ldots, n + 1} \| x_{n + 1} - x_i^* \|^2 \leq \sum_{j = 2}^p d_j^2 (u_{(n + 1)j} - u^*_{i_0 j})^2$. Since the columns of $U$ have unit lengths, the maximal value for $(u_{(n + 1)j} - u^*_{i_0 j})^2$ is 2, obtained when only two elements are non-zero and equal $1/\sqrt{2}$ and $-1/\sqrt{2}$. Hence,
    \begin{align}\label{eq:upper_bound_1}
        \min_{i = 1, \ldots, n + 1} \| x_{n + 1} - x_i^* \|^2 \leq 2 (p - 1) d_2^2, 
    \end{align}
    where $d_2^2$ is the variance of the second principal component.
    
    We next obtain a bound for $d_2^2$. Let $v_1, \ldots, v_p$ denote the columns of $V$ (i.e. the directions of the principal components). By writing
    \begin{align*}
        \frac{x_{n + 1}}{\| x_{n + 1} \|} =  v_1 v_1' \frac{x_{n + 1}}{\| x_{n + 1} \|} + \cdots + v_p v_p' \frac{x_{n + 1}}{\| x_{n + 1} \|},
    \end{align*}
    we obtain that 
    \begin{align*}
        1 = \left( v_1' \frac{x_{n + 1}}{\| x_{n + 1} \|} \right)^2 + \cdots + \left( v_p' \frac{x_{n + 1}}{\| x_{n + 1} \|} \right)^2.
    \end{align*}
    Now, combining this with Lemma \ref{lem:pc_direction}, which shows, for $\| x_{n + 1} \|/M$ large enough, that the first principal component direction $v_1$ satisfies
    \begin{align*}
        \left( \frac{x_{n + 1}'}{\| x_{n + 1} \|} v_1 \right)^2 \geq \left( 1 -  \frac{2M}{\| x_{n + 1} \|} \right)^2,
    \end{align*}
    we see that
    \begin{align*}
        \left( v_2' \frac{x_{n + 1}}{\| x_{n + 1} \|} \right)^2 \leq 1 - \left( \frac{x_{n + 1}'}{\| x_{n + 1} \|} v_1 \right)^2 \leq 1 - \left( 1 -  \frac{2M}{\| x_{n + 1} \|} \right)^2,
    \end{align*}
    which implies that the projection of the outlying point satisfies $( v_2' x_{n + 1} )^2 \leq 4 M ( \| x_{n + 1} \| - M )$.
    Meanwhile, the projections of the non-outlying points satisfy $| v_2' x_1 |, \ldots , | v_2' x_n | \leq M$. Lemma \ref{lem:variance_lemma} shows that the variance of this projection is at most
    \begin{align*}
        \frac{1}{n + 1} \{ nM^2 + ( v_2' x_{n + 1} )^2 \} \leq \frac{M}{n + 1} \{ nM + 4 ( \| x_{n + 1} \| - M ) \},
    \end{align*}
    which is then also an upper bound for the variance $d_2^2/(n + 1)$ of the second PC. Hence, \eqref{eq:upper_bound_1} gives
    \begin{align}\label{eq:upper_bound_2}
        \min_{i = 1, \ldots, n + 1} \| x_{n + 1} - x_i^* \|^2 \leq 2 (p - 1) M \{ (n - 4) M + 4 \| x_{n + 1} \| \},
    \end{align}
    from which we obtain that
    \begin{align*}
        \min_{i = 1, \ldots, n + 1} \frac{\| x_{n + 1} - x_i^* \|^2}{\| x_{n + 1} \|^2} = \mathcal{O} \left( \frac{M^2}{\| x_{n + 1} \|^2} \right) + \mathcal{O} \left( \frac{M}{\| x_{n + 1} \|} \right),
    \end{align*}
    where the second term dominates. Finally, since the upper bound \eqref{eq:upper_bound_2} is valid for all possible spectral anonymizations, we obtain the desired claim.
\end{proof}


%
%
%
\bibliographystyle{splncs04}
\bibliography{mybibliography}

\end{document}